\pgfplotsset{compat=1.16}
\title{\LARGE \bf
RLO-MPC: Robust Learning-Based Output Feedback MPC for Improving the Performance of Uncertain Systems in Iterative Tasks 
}
\author{Lukas Brunke, Siqi Zhou, and Angela P. Schoellig
\thanks{The authors are with the Dynamic Systems Lab
	(http://www.dynsyslab.org), Institute for Aerospace Studies,
	University of Toronto, Canada. The authors are also affiliated with
	the Vector Institute for Artificial Intelligence, Toronto. Emails:
	\{lukas.brunke, siqi.zhou\}@robotics.utias.utoronto.ca, schoellig@utias.utoronto.ca}%
}
\renewcommand\vec{\mathbf} 
\newcommand{\set}[1]{\mathbb{#1}}
\newcommand{\R}{\set{R}}
\newcommand{\N}{\set{N}}
\newtheorem{theorem}{Theorem}
\newtheorem{definition}{Definition}
\newtheorem{assumption}{Assumption}
\newtheorem{corollary}{Corollary}
\begin{document}

\maketitle
\thispagestyle{empty}
\pagestyle{empty}

\begin{abstract}
In this work we address the problem of performing a repetitive task when we have uncertain observations and dynamics. 
We formulate this problem as an iterative infinite horizon optimal control problem with output feedback. Previously, this problem was solved for linear time-invariant~(LTI) system for the case when noisy full-state measurements are available using a robust iterative learning control framework, which we refer to as robust learning-based model predictive control~(RL-MPC). 
However, this work does not apply to the case when only noisy observations of part of the state are available. This limits the applicability of current approaches in practice: First, in practical applications we typically do not have access to the full state. Second, uncertainties in the observations, when not accounted for, can lead to instability and constraint violations. To overcome these limitations, we propose a combination of RL-MPC with robust output feedback model predictive control, named robust learning-based output feedback model predictive control~(RLO-MPC). 
We show recursive feasibility and stability, and prove theoretical guarantees on the performance over iterations. We validate the proposed approach with a numerical example in simulation and a quadrotor stabilization task in experiments.
\end{abstract}

\section{INTRODUCTION}
Performing iterative tasks when having uncertain observations and a uncertain dynamics model is a common problem in practice. Such a problem setup arises in industrial robotics~\cite{Arimoto1984}, mobile robotics~\cite{Mckinnon2019}, or in autonomous car racing~\cite{Rosolia_racing}. The uncertainties can be due to sensor noise or nonlinear effects that are difficult to identify and model.

Iterative learning control (ILC) can be applied in such cases to achieve high performance in repetitive tasks. ILC uses data from previous iterations to improve controller performance, for example, by reference signal adaptation for future iterations~\cite{Schoellig2012}.
Model predictive control (MPC) is another popular control strategy. At every time step, MPC solves an open-loop finite horizon optimal control problem and only applies the first control input of the optimal input sequence to the system. This is executed in a receding horizon fashion. The benefit of MPC is that its optimization problem explicitly considers state and input constraints.
Combinations of ILC and linear model predictive control for repetitive tasks have been proposed in the past. In~\cite{Lee1999}, the authors use MPC with a time-varying multiple-input-multiple-output~(MIMO) linear model and use stored tracking errors from past iterations to correct for model errors. The authors extend this work by proving asymptotic tracking for linear constrained systems in~\cite{Lee2000}. In~\cite{Cueli2008}, the authors consider a nonlinear model, linearize the dynamics model in the MPC based on the previous iteration's trajectory, and prove stability and fast convergence to set points. 

\begin{figure}[t]
	\centering
	\vspace{3mm}
	\def\svgwidth{\linewidth}
	\includegraphics[width=\linewidth]{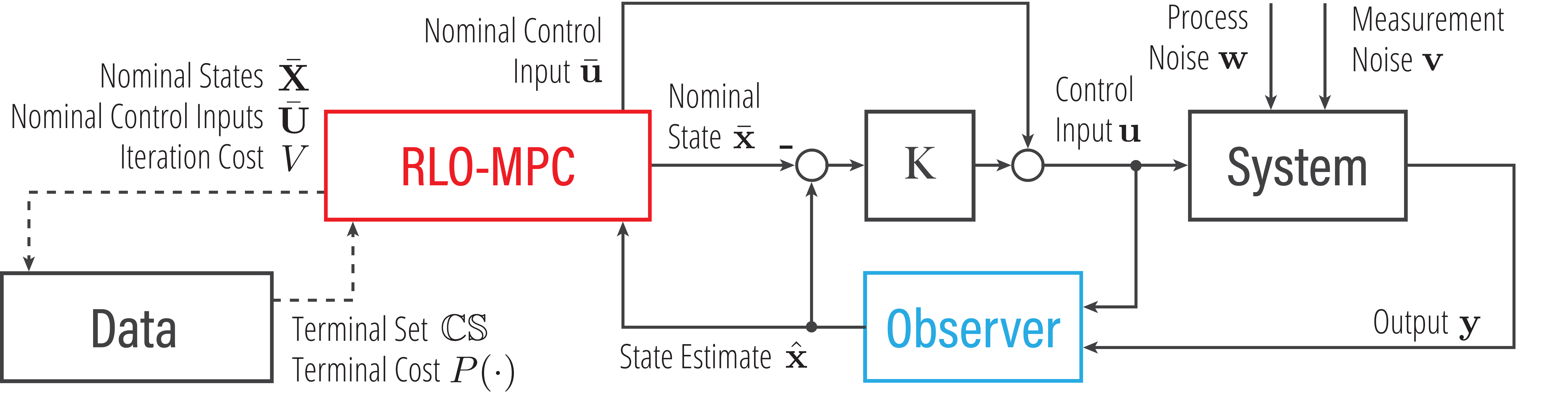}
	\caption{Our proposed robust learning-based output feedback MPC (RLO-MPC, red) combines robust output feedback MPC with iterative learning control.  In contrast to robust output feedback MPC, we propose to determine the terminal set and terminal cost function from data after each iteration (indicated by dashed lines). To handle uncertain outputs, we extend robust learning-based MPC (RL-MPC) with an observer (blue) and account for state estimation errors. } 
	\label{fig:block_diagram}
\end{figure}

Recently, the authors of~\cite{Rosolia2018a} presented an efficient learning-based model predictive control~(L-MPC) framework for a stabilization task. In~\cite{Rosolia2018a}, the terminal cost and terminal constraint set are determined from data of past iterations with the goal to guarantee stability and constraint satisfaction of the MPC. Formal guarantees have been derived for special cases such as LTI systems~\cite{Rosolia2017} and robust learning-based model predictive control~(RL-MPC) for LTI systems with bounded process noise~\cite{Rosolia2018, Rosolia2021}. 

While existing work on L-MPC considers full-state feedback, 
\textcolor{black}{safe learning-based control approaches with output feedback remain to be an open challenge~\cite{DSL2021}}.
In most practical settings, the full state cannot be measured as adequate sensors are not available. State estimation aims to recover the full-state information and its associated uncertainty from partial state observations, called measurements or output. 
The state estimation errors can be accounted for by extending robust MPC methods to also consider
bounded measurement noise and a state estimator. This is referred to as robust output feedback MPC and includes tube-based approaches~\cite{Kogel2017, Lorenzetti2020, Mayne2006} and minmax approaches~\cite{Copp2017, Findeisen2004, Lofberg2002}. In this work, we use a tube-based approach to achieve a lower online computational effort. 

Our proposed robust learning-based output feedback MPC, which we call RLO-MPC, leverages tube-based robust output feedback MPC for guaranteeing stability and constraint satisfaction despite uncertainties and iterative learning for improving performance. 
We extend the RL-MPC for an LTI system with noisy state measurements in~\cite{Rosolia2018} to an LTI system with uncertain outputs. 
We tighten the constraints according to the additional state estimation uncertainty using methods from robust output feedback MPC. We compute the terminal constraint set and the terminal cost using data from previous iterations similar to RL-MPC. 
We derive theoretical guarantees for stability, recursive feasibility, and performance. 
In contrast to previous work~\cite{Rosolia2018, Rosolia2021}, we assume that the initial state is contained inside a compact set. 
While performance improvement over the iterations cannot be guaranteed for this case, we derive upper bounds on the total iteration cost. This upper bound is non-increasing and is reduced over iterations.
The proposed RLO-MPC is validated in simulation and through quadrotor experiments in Sec.~$\ref{sec:experiments}$. In simulation we also show the benefit of the proposed controller as compared to a certainty equivalent approach, which does not consider state estimation errors. 

\section{PROBLEM STATEMENT}

We consider an uncertain discrete-time LTI system:
\begin{align}
\label{eq:system}
\begin{split}
\vec{x}_{t + 1} &= \vec{A} \vec{x}_t + \vec{B} \vec{u}_t + \vec{w}_t \,, \\
\vec{y}_t &= \vec{C} \vec{x}_t + \vec{v}_t \,,
\end{split}
\end{align}
where $\vec{x}_t \in \R^{n_x}$ is the state at time $t$, $\vec{u}_t \in \R^{n_u}$ is the input, $\vec{y}_t \in \R^{n_y}$ is the output and, $\vec{A}$, $\vec{B}$, and $\vec{C}$ are known real system matrices of consistent dimension. The system is assumed to be controllable and observable. The process noise $\vec{w}_t \in \set{W} \subset \R^{n_x}$ and measurement noise $\vec{v}_t \in \set{V} \subset \R^{n_y}$ affect the system at every time step. The sets $\set{W}$ and $\set{V}$ are known compact polytopes that contain the origin in their non-empty interior. The system is subject to known hard state and input constraints 
$\vec{x} \in \set{X}\,, \vec{u} \in \set{U}$,
where $\set{X} \subset \R^{n_x}$ is closed and $\set{U} \subset \R^{n_u}$ is compact. We assume that both $\set{X}$ and $\set{U}$ are polyhedral and contain the origin in their non-empty interior. 

The goal of this work is to design a controller that solves the iterative infinite horizon optimal control problem at every iteration $j$ optimized over $\vec{u}_{k}^j$ for all $k\geq 0$: 
\begin{align}
	\label{eq:olmpc_ftocp}
	\raisetag{60pt}
	\begin{split}
	V^{j, *}_{0 \to \infty}(\vec{y}_S) =
	 \min_{\substack{\vec{u}_{0}^j, \vec{u}_{1}^j, \dots 
	}} 
	& \quad \sum_{k = 0}^{\infty} \left(\vec{y}_{k}^{j} \right)^\intercal \tilde{\vec{Q}} \vec{y}_{k}^j + \left(\vec{u}_{k}^{j} \right)^\intercal  \tilde{\vec{R}} \vec{u}_{k}^{j} \\
	\mathrm{s.t.} & \quad \vec{x}_{k + 1}^j =\vec{A}\vec{x}_{k}^j + \vec{B}\vec{u}_{k}^j + \vec{w}_k^j\,, \\
	& \quad \vec{y}_{k}^j =\vec{C}\vec{x}_{k}^j + \vec{v}_{k}^j\,, \vec{x}_{k}^j \in \set{X} \,, \vec{u}_{k}^j \in \set{U} \,,  \\
	& \quad  \vec{w}_{k}^j \in \set{W} \,, \vec{v}_{k}^j \in \set{V} \,, \vec{y}_{0}^j = \vec{y}_S  \,, \forall k\in \N \,, \\
	\end{split}
\end{align}
where $\tilde{\vec{Q}}$ and $\tilde{\vec{R}}$ are positive definite matrices, $\vec{u}_{k}^{j}$ are output feedback policies, and $\vec{y}_S$ is the initial output. This iterative infinite horizon problem is generally intractable due to the infinite horizon. Typically the infinite horizon is approximated by a finite horizon and a terminal cost that accounts for the cost-to-go. In this paper, we approach this problem by exploiting the iterative nature of the task and by learning the terminal cost function and terminal constraint set for a robust output feedback MPC from previous iterations. 

\section{BACKGROUND}
In this section, we present the necessary background for the proposed approach.

\subsection{Notation}
We denote the set of natural numbers as $\N$ with $0 \in \N$. The set of integers from $i \in \N$ to $j \in \N$ is denoted as $\set{I}_{i,j} = \{i, \dots, j\}$, where $j > i$. Let $\set{A}, \set{B} \subset \R^n$, then the Minkowski sum is defined as $\set{A} \oplus \set{B} = \{\vec{a} + \vec{b} : \vec{a} \in \set{A}, \vec{b} \in \set{B}\}$ and the Pontryagin difference is defined as $\set{A} \ominus \set{B} = \{\vec{x} \in \R^n : \vec{x} \oplus \set{B} \subseteq \set{A}\}$.

\subsection{Uncertainty Bounding with a Single Tube}
The system under consideration is uncertain (see~\eqref{eq:system}). To account for the uncertainties, we design a single error tube according to~\cite{Kogel2017, Lorenzetti2020, Mayne2006} using robust positively invariant (RPI) sets. The error tube guarantees that any errors starting inside the error tube, will stay inside the error tube for all future time steps $t \geq 0$.
\begin{definition}[Robust Positively Invariant (RPI) Set~\cite{Borrelli2017}]
	A set $\set{P} \subseteq \set{X}$ is robust positively invariant for an autonomous system subject to state constraints $\set{X}$, input constraints $\set{U}$, and bounded process noise set $\set{W}$, if the initial system state $\vec{x}_0 \in \set{P}$ implies $\vec{x}_t \in \set{P}\,, \forall \vec{w} \in \set{W}, t \geq 0$.
\end{definition}

We start by introducing the observer and the error state dynamics. Since only the outputs are available, we employ a Luenberger observer to estimate the state:
\begin{align}
\label{eq:luenberger}
	\begin{split}
	\hat{\vec{x}}_{t + 1} &= \vec{A} \hat{\vec{x}}_{t} + \vec{B} \vec{u}_t + \vec{L} (\hat{\vec{y}}_t - \vec{y}_t) \,, \\
	\hat{\vec{y}}_t &= \vec{C} \hat{\vec{x}}_t \,,	
	\end{split}
\end{align}
where $\hat{\vec{x}}_t$ and $\hat{\vec{y}}_t$ are the estimated state and output at time step $t$, and $\vec{L} \in \R^{n_x \times n_y}$ is the static observer gain and is chosen such that $\vec{A}_L = \vec{A} + \vec{L}\vec{C}$ is Schur stable. The resulting dynamics of the state estimation error $\vec{e}_{t} = \vec{x}_t - \hat{\vec{x}}_t$ is
\begin{equation}
\label{eq:estimation_error}
\vec{e}_{t + 1} = \vec{A}_L \vec{e}_{t} + \vec{w}_t + \vec{L} \vec{v}_t \,.
\end{equation}
We additionally introduce the nominal system
\begin{equation}
\label{eq:nominal_system}
\bar{\vec{x}}_{t + 1} = \vec{A} \bar{\vec{x}}_{t} + \vec{B} \bar{\vec{u}}_t \,,
\end{equation}
where $\bar{\vec{x}}_t \in \R^{n_x}$ is the nominal state at time step $t$ and $\bar{\vec{u}}_t \in \R^{n_u}$ is the input to the nominal system. The actual control input to the system then combines the nominal control input as a feedforward term and a feedback term, shown in Figure~\ref{fig:block_diagram}, as
\begin{equation*}
\vec{u}_t = \bar{\vec{u}}_t + \vec{K} \vec{\xi}_t \,,
\end{equation*}
where $\vec{\xi}_t = \hat{\vec{x}}_t - \bar{\vec{x}}_t$ and the static feedback gain $\vec{K} \in \R^{n_u \times n_x}$, which is selected such that $\vec{A}_K = \vec{A} + \vec{B} \vec{K}$ is Schur stable. The error between the estimated and the nominal state yields the dynamics
\begin{equation}
\label{eq:est_nom_error}
\vec{\xi}_{t + 1} = \vec{A}_K \vec{\xi}_t - (\vec{L} \vec{C} \vec{e}_{t} + \vec{L} \vec{v}_t)\,.
\end{equation}
Using the error definitions the system state can be written as
\begin{equation*}
\vec{x}_t = \bar{\vec{x}}_t + \vec{\xi}_t + \vec{e}_{t} \,. 
\end{equation*}
The errors $\vec{\xi}_t$ and $\vec{e}_{t}$ are shown to be bounded in~\cite{Mayne2006}. The estimation error dynamics~\eqref{eq:estimation_error} guarantees the existence of the minimally RPI (mRPI) set $\set{E}_\infty$ that satisfies
\begin{equation}
\label{eq:e_inf}
\set{E}_\infty = (\vec{A} + \vec{LC}) \set{E}_\infty \oplus \set{W} \oplus \vec{L} \set{V} \,.
\end{equation} 
Similarly, the dynamics of the difference between the estimated state and the nominal state~\eqref{eq:est_nom_error} yields the mRPI set $\Xi_\infty$, that satisfies
\begin{equation}
\label{eq:xi_inf}
\Xi_\infty = (\vec{A} + \vec{BK}) \Xi_\infty \oplus (- \vec{LC}) \set{E}_\infty \oplus (- \vec{L}) \set{V} \,.
\end{equation}
The estimation error set~\eqref{eq:e_inf}~and the prediction error set~\eqref{eq:xi_inf} allow us to choose the initial nominal state $\bar{\vec{x}}_0$ and the sequence of nominal control inputs $\bar{\vec{U}} = \begin{pmatrix}
	\bar{\vec{u}}_0, \dots, \bar{\vec{u}}_t, \dots 
\end{pmatrix}$ such that the state and the input of the uncertain system~\eqref{eq:system} satisfy the original constraints for all time steps. 
For reduced conservatism, we follow~\cite{Kogel2017} and combine the error states into a single error system:
\begin{equation}
\label{eq:extended_error_system}
\vec{z}_{t + 1} = \begin{pmatrix}
\vec{e}_{t + 1} \\
\vec{\xi}_{t + 1}
\end{pmatrix} = \vec{F} \vec{z}_t + \vec{G} \vec{d}_t \,,
\end{equation}
where $\vec{d}_t \textcolor{black}{= \begin{pmatrix}
	\vec{w}_t^\intercal, \vec{v}_t^\intercal
\end{pmatrix}^\intercal} \in \set{D} = \set{W} \times \set{V}$ and 
\begin{equation*}
\vec{F} = \begin{pmatrix}
\vec{A}_L & \vec{0} \\
- \vec{L C} & \vec{A}_K
\end{pmatrix} \,, ~
\vec{G} = \begin{pmatrix}
\vec{I} & \vec{L} \\
\vec{0} & - \vec{L}
\end{pmatrix} \,,
\end{equation*}
where $\vec{F}$ is Schur stable by design. It follows that the combined error dynamics are bounded given that $\vec{F}$ is Schur stable and $\set{D}$ is bounded. 
Similar to~\cite{Lorenzetti2020} we consider a convex, compact mRPI set $\set{Z}_\infty$ for the extended error system~\eqref{eq:extended_error_system}:
\begin{equation}
\label{eq:z_inf}
\set{Z}_{\infty} = \vec{F} \set{Z}_\infty \oplus \vec{G} \set{D} \,.
\end{equation}
We can use the set $\set{Z}_\infty$ to tighten the state and inputs constraints as follows:
\begin{equation*}
\bar{\set{X}} = \set{X} \ominus \begin{pmatrix}
\vec{I} & \vec{I} 
\end{pmatrix} \set{Z}_\infty \,, ~ \bar{\set{U}} = \set{U} \ominus \begin{pmatrix}
\vec{0} & \vec{K}
\end{pmatrix} \set{Z}_\infty \,.
\end{equation*}
Compared to~\cite{Mayne2006},~\eqref{eq:z_inf} yields a \textit{single} tube and not two tubes based on~\eqref{eq:e_inf}~and~\eqref{eq:xi_inf}, respectively. This is desirable for tube-based MPC as the state and input constraints are less restrictive.  

\subsection{Robust Output Feedback MPC}
In this subsection, we introduce robust output feedback MPC~\cite{Mayne2006}. In the following discussion, we use $\vec{x}_{k \lvert t}$ to denote the open-loop prediction of state~$\vec{x}_k$ at time step~$t$ and $t_1:t_2$ abbreviates consecutive time indices. Consider the finite time optimal control problem~(FTOCP) initialized at $t = 0$ with $\bar{\vec{x}}_0 = \hat{\vec{x}}_0$: 
\begin{align}
	\label{eq:rmpc_ftocp}
	\raisetag{40pt}
	\begin{split}
	V_{t \to t + H}(\textcolor{black}{\bar{\vec{x}}_t}) =
	\min_{\substack{\bar{\vec{u}}_{t:t + H -1 \vert t}}}
	& \quad \sum_{k = t}^{t + H - 1} \ell\left(\bar{\vec{x}}_{k \vert t}, \bar{\vec{u}}_{k \vert t}\right) + V_f\left(\bar{\vec{x}}_{t + H \vert t}\right) \\
	\text{s.t.} & \quad \forall k\in \set{I}_{t, t+ H - 1} \,, \\
	& \quad \bar{\vec{x}}_{k + 1 \vert t} =\vec{A}\bar{\vec{x}}_{k \vert t} + \vec{B}\bar{\vec{u}}_{k \vert t}\,, \\
	& \quad \bar{\vec{x}}_{k \vert t} \in \bar{\set{X}} \,, \bar{\vec{u}}_{k \vert t} \in \bar{\set{U}} \,,  \\
	& \quad \textcolor{black}{\bar{\vec{x}}_{t \vert t} = \bar{\vec{x}}_{t} } \,, \bar{\vec{x}}_{t + H \vert t} \in \bar{\set{X}}_{f} \,, 
	\end{split}
\end{align}
where $H > 0$ is the horizon,
$\ell\left(\vec{x}, \vec{u}\right) = \vec{x}^\intercal \vec{Q} \vec{x} + \vec{u}^\intercal \vec{R} \vec{u}$,
is the stage cost with positive definite matrices $\vec{Q}$ and $\vec{R}$,
$V_f\left(\vec{x}\right) = \vec{x}^\intercal \vec{P} \vec{x}$, 
is the cost-to-go, where $\vec{P}$ is the solution to the discrete-time algebraic Riccati equation of the unconstrained LQR problem for the nominal system~\eqref{eq:nominal_system}.
The set $\bar{\set{X}}_{f}$ is the terminal constraint set\textcolor{black}{ that is control invariant for the nominal system \eqref{eq:nominal_system} and} determined using the tightened state and input constraints $\bar{\set{X}}$ and $\bar{\set{U}}$ and the closed-loop system matrix $\vec{A} + \vec{B} \vec{K}_f$, where $\vec{K}_f$ is the optimal LQR state feedback gain. 
Only the first optimal input $\bar{\vec{u}}^*_{t \vert t}$ from the FTOCP is used at time step~$t$:
\begin{equation*}
\vec{u}_t^* = \bar{\vec{u}}^*_{t \vert t} + \vec{K} \vec{\xi}_t \,,
\end{equation*}
and applied to the system in \eqref{eq:system}. The FTOCP \textcolor{black}{is} solved in a receding horizon fashion at every time step \textcolor{black}{$t \geq 0$ using~\eqref{eq:rmpc_ftocp}.} 

We consider a quadratic cost and a linear system with polytopic \textcolor{black}{state and input} constraints. \textcolor{black}{The FTOCP in~\eqref{eq:rmpc_ftocp} is a quadratic program (QP) that can be efficiently solved.}
In contrast to~\cite{Kogel2017, Mayne2006}, we do not optimize over the initial state and keep it fixed for all time steps. This allows the comparison of the total iteration cost for varying initial conditions. 

\subsection{Convex Safe Set}
We collect the nominal state and nominal control input for iteration $j$ using
$
\bar{\vec{X}}^j = \begin{pmatrix}
\bar{\vec{x}}_0^j, \dots, \bar{\vec{x}}_t^j, \dots 
\end{pmatrix} \,, \bar{\vec{U}}^j = \begin{pmatrix}
\bar{\vec{u}}_0^j, \dots, \bar{\vec{u}}_t^j, \dots 
\end{pmatrix}$.
As in~\cite{Rosolia2018, Rosolia2017} we define the set of indices of successful iterations, which converge to the origin:
\begin{equation*}
\set{M}^j = \left\{ k \in \set{I}_{0, j} : \lim_{t \to \infty} \bar{\vec{x}}_t^k = \vec{0} \right\} \,.
\end{equation*}
The sampled safe set $\set{SS}^j$ for iteration $j$ is the collection of all nominal states for every successful iteration: 
\begin{equation*}
\set{SS}^j = \left\{ \bigcup_{i \in \set{M}^j} \bigcup_{t = 0}^{\infty} \bar{\vec{x}}_t^j \right\} \,.
\end{equation*} 
Furthermore, we introduce the convex safe set $\set{CS}^j$ with 
\begin{equation}
\label{eq:conv_safe_set}
\set{CS}^j = \mathrm{conv}(\set{SS}^j) \,,
\end{equation}
where $\mathrm{conv}(\cdot)$ indicates the convex hull. For an LTI \textcolor{black}{nominal} system, the convex safe set is a control invariant set and is substituted for the terminal constraint set $\bar{\set{X}}_{f}$ in the robust output feedback MPC~\eqref{eq:rmpc_ftocp}~\cite{Borrelli2017}.

\subsection{Terminal Cost}
The nominal cost-to-go at time step $t$ of iteration $j$ for previously collected state and control input data is given by
\begin{equation*}
V_{t \to \infty}^j (\bar{\vec{x}}_t^j) = \sum_{k = t}^\infty \ell \left(\bar{\vec{x}}_k^j, \bar{\vec{u}}_k^j\right) \,,
\end{equation*}
and the total nominal iteration cost for iteration $j$ is $V_{0 \to \infty}^j(\bar{\vec{x}}_0^j)$. 
Finally, we introduce the barycentric function from~\cite{Rosolia2017} and define it based on the nominal state as in~\textcolor{black}{\cite{Rosolia2021}}:
\begin{align}
	\label{eq:cost_to_go}
	\raisetag{40pt}
	\begin{split}
	P^j(\bar{\vec{x}}) := \min_{\substack{\lambda_t \geq 0, t \in \set{N}_0}} & \quad \sum_{k = 0}^{j} \sum_{t = 0}^{\infty} \lambda_t^k V_{t \to \infty}^k (\bar{\vec{x}}_t^k)  \\
	\mathrm{s.t.} & \quad \sum_{k = 0}^{j} \sum_{t = 0}^{\infty} \lambda_t^k = 1 \,, \sum_{k = 0}^{j} \sum_{t = 0}^{\infty} \lambda_t^k \bar{\vec{x}}_t^k = \bar{\vec{x}} \,. 
	\end{split} 
\end{align}
The function $P^j(\bar{\vec{x}})$ assigns the minimum cost-to-go using a barycentric interpolation to each nominal state inside of the convex safe set $\set{CS}$.
This will replace the terminal cost in the robust output feedback MPC~\eqref{eq:rmpc_ftocp}.

\section{ROBUST LEARNING-BASED OUTPUT FEEDBACK MPC}
\label{sec:roflmpc}
This section introduces the proposed RLO-MPC. We start by presenting the algorithm in Sec.~\ref{sec:roflmpc_algo} and derive the properties of RLO-MPC in Sec.~\ref{sec:roflmpc_properties}.
\subsection{Robust Learning-Based Output Feedback MPC Algorithm}
\label{sec:roflmpc_algo}
The RLO-MPC approximates the infinite horizon problem~\eqref{eq:olmpc_ftocp} as a finite-time horizon optimization problem, which is solved at each time step $t$ of iteration \textcolor{black}{$j > 0$}:
\begin{align}
	\label{eq:roflmpc_ftocp}
	\raisetag{40pt}
	\begin{split}
	V^{j}_{t \to t + H}(\textcolor{black}{\bar{\vec{x}}_t^j}) =  \\ 
	\min_{\substack{\bar{\vec{u}}_{t:t + H -1 \vert t}^j}}
	& \sum_{k = t}^{t + H - 1} \ell\left(\bar{\vec{x}}_{k \vert t}^j, \bar{\vec{u}}_{k \vert t}^j\right) + P^{j - 1}\left(\bar{\vec{x}}_{t + H \vert t}^j\right)  \\
	\mathrm{s.t.} & \quad \forall k\in \set{I}_{t, t + H - 1} \,, \textcolor{black}{\bar{\vec{x}}_{t \vert t}^j = \bar{\vec{x}}_t^j} \,,\\
	& \quad \bar{\vec{x}}_{k + 1 \vert t}^j =\vec{A}\bar{\vec{x}}_{k \vert t}^j + \vec{B}\bar{\vec{u}}_{k \vert t}^j\,, \\
	& \quad \bar{\vec{x}}_{k \vert t}^j \in \bar{\set{X}} \,, \bar{\vec{u}}_{k \vert t}^j \in \bar{\set{U}} \,, \bar{\vec{x}}_{t + H \vert t}^j \in \set{CS}^{j - 1} \,,  \\
	\end{split}
\end{align}
where \textcolor{black}{$\bar{\vec{x}}_0^j = \hat{\vec{x}}_0^j \in \set{E}_0 \subset \bar{\set{X}}$} and $\set{E}_0$ is a compact polyhedral set. The set $\set{E}_0$ accounts for the fact that restarting the next iteration from the same initial state estimate is restrictive and not practical. 
Compared to the output feedback robust MPC \eqref{eq:rmpc_ftocp}, we use $P^{j}$ and $\set{CS}^{j}$ for the terminal cost function and the terminal constraint set, respectively. This leads to an updated terminal cost function and terminal constraint set at every iteration. 
The problem is solved in a receding horizon fashion and the control input to system~\eqref{eq:system} at time step $t$ is given by
\begin{equation}
\label{eq:roflmpc_control_policy}
\vec{u}_t^{j\textcolor{black}{, *}} = \bar{\vec{u}}_{t \vert t}^{j, *} + \vec{K} \vec{\xi}_t^j \,,
\end{equation}
where $\vec{\xi}_t^j = \hat{\vec{x}}_t^j - \bar{\vec{x}}_t^j$, $\bar{\vec{u}}_{t \vert t}^*$ is the input at time $t$ of the optimal solution to~\eqref{eq:roflmpc_ftocp} for the nominal system~\eqref{eq:nominal_system}. The RLO-MPC is initialized with $P^{0}$ and $\set{CS}^{0}$, which is a given feasible, suboptimal solution to the FTOCP. As in~\cite{Rosolia2018}, we make the following assumption: 
\begin{assumption}
	\label{as:safe_set_init}
	We are given an initial nominal state and input sequence $\bar{\vec{X}}^0$, $\bar{\vec{U}}^0$, that satisfy the nominal system dynamics \eqref{eq:nominal_system} and the tightened constraints $\bar{\set{X}}$ and $\bar{\set{U}}$, the initial state estimation constraint $\set{E}_0$, and successfully drive the nominal system to the origin. 
	\textcolor{black}{Further, the initial state estimation error satisfies $\vec{e}_0^j \in \set{E}_\infty, \forall j \geq 0$. }
\end{assumption} 

Assumption~\ref{as:safe_set_init} ensures that the terminal set at the first iteration is non-empty. This assumption is not restrictive in practice, as a conservative controller tuning or a human operator can provide a suboptimal trajectory for most systems. \textcolor{black}{One approach to reduce the conservatism is to optimize the initial nominal state at every time step as in~\cite[Ch.~8]{RosoliaThesis}. However, this would require solving a quadratically constrained quadratic program, which increases the online computation cost. 
}

\subsection{Properties of RLO-MPC}
\label{sec:roflmpc_properties}
In the following, we present the properties of the proposed RLO-MPC. 
To guarantee feasibility for every future iteration, we assume: 
\textcolor{black}{
	\begin{assumption}
		\label{as:reachable_safe_set}
		There exists a feasible solution to~\eqref{eq:roflmpc_ftocp} at $j = 1$ for all $\hat{\vec{x}}_0^1 \in \set{E}_0$. 
	\end{assumption}
}
%
%
%
%

\subsubsection{Recursive Feasibility and Stability}
We present the following corollary on the recursive feasibility and stability of the proposed RLO-MPC:
\begin{corollary}
	Consider system~\eqref{eq:system},~\eqref{eq:luenberger} in closed-loop with the RLO-MPC controller~\eqref{eq:roflmpc_ftocp}~and~\eqref{eq:roflmpc_control_policy}. Under Assumptions~\ref{as:safe_set_init}~and~\ref{as:reachable_safe_set}, the RLO-MPC~\eqref{eq:roflmpc_ftocp},~\eqref{eq:roflmpc_control_policy} is feasible for all times $t \geq 0$ and all iterations $j> 0$. Moreover, the system's state $\vec{x}_t^j$ asymptotically converges to the set $\set{\begin{pmatrix}
		\vec{I} & \vec{I} 
		\end{pmatrix} \set{Z}_\infty}$ for every iteration $j>0$, all $\vec{w} \in \set{W}$, and all $\vec{v} \in \set{V}$.
\end{corollary}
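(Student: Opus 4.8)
The plan is to follow the standard two-part template for learning-based MPC (recursive feasibility, then stability) adapted to the output-feedback tube setting. For \emph{recursive feasibility}, I would argue by induction on the iteration index $j$ and, within each iteration, by induction on the time step $t$. The base case $j=1$ is exactly Assumption~\ref{as:reachable_safe_set}. For the inductive step within an iteration, suppose the FTOCP~\eqref{eq:roflmpc_ftocp} is feasible at time $t$ with optimal nominal input sequence $\bar{\vec{u}}_{t:t+H-1\vert t}^{j,*}$ and terminal state $\bar{\vec{x}}_{t+H\vert t}^{j,*}\in\set{CS}^{j-1}$. Since $\set{CS}^{j-1}$ is control invariant for the nominal system~\eqref{eq:nominal_system}, there is an admissible nominal input that keeps the successor state inside $\set{CS}^{j-1}$ while respecting $\bar{\set{U}}$; concretely one can use the barycentric representation of $\bar{\vec{x}}_{t+H\vert t}^{j,*}$ together with the tails of the stored successful trajectories (which lie in $\bar{\set{X}}$, satisfy $\bar{\set{U}}$, and converge to the origin) to construct the shifted candidate. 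Appending this control and dropping the first element gives a feasible (generally suboptimal) candidate at time $t+1$, which establishes feasibility for all $t\ge 0$ within iteration $j$. The jump to iteration $j+1$ then uses Assumption~\ref{as:safe_set_init}: the closed-loop nominal trajectory of iteration $j$ converges to the origin (this is the stability conclusion, proved next), so iteration $j$ is a successful iteration and its nominal states are added to $\set{SS}^{j}$ and hence $\set{CS}^{j}\supseteq\set{CS}^{j-1}$; since the new initial estimate $\hat{\vec{x}}_0^{j+1}\in\set{E}_0$ and a feasible trajectory from $\set{E}_0$ existed at $j=1$, and the terminal set only grew, feasibility at $j+1$ follows. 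I would need to be slightly careful to state this as a simultaneous induction (feasibility of iteration $j$ $\Rightarrow$ convergence of iteration $j$'s nominal trajectory $\Rightarrow$ $\set{CS}^j$ well-defined and non-shrinking $\Rightarrow$ feasibility of iteration $j+1$).

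For \emph{stability}, I would first show that the nominal closed-loop state $\bar{\vec{x}}_t^j$ converges to the origin, using $P^{j-1}$ as a Lyapunov-like function exactly as in~\cite{Rosolia2017,Rosolia2018}. The optimal value function $V^j_{t\to t+H}(\bar{\vec{x}}_t^j)$ decreases along the closed-loop nominal trajectory because the shifted candidate constructed above has cost equal to the previous optimal cost minus the stage cost $\ell(\bar{\vec{x}}_t^j,\bar{\vec{u}}_{t\vert t}^{j,*})$ plus the change in terminal cost, and the key property of the barycentric terminal cost $P^{j-1}$ (Lemma/definition~\eqref{eq:cost_to_go}) is that it is an upper bound on the cost-to-go that is consistent with the stored trajectories, so the terminal-cost increment is nonpositive along the candidate. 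Since $\ell$ is positive definite, summing the decrease gives $\ell(\bar{\vec{x}}_t^j,\bar{\vec{u}}_{t\vert t}^{j,*})\to 0$, hence $\bar{\vec{x}}_t^j\to\vec{0}$ and $\bar{\vec{u}}_t^j\to\vec{0}$. Then I lift this to the true state via the decomposition $\vec{x}_t^j=\bar{\vec{x}}_t^j+\vec{\xi}_t^j+\vec{e}_t^j$, equivalently $\vec{x}_t^j=\bar{\vec{x}}_t^j+\begin{pmatrix}\vec{I}&\vec{I}\end{pmatrix}\vec{z}_t^j$: since $\vec{e}_0^j\in\set{E}_\infty$ (Assumption~\ref{as:safe_set_init}) and $\vec{\xi}_0^j=\hat{\vec{x}}_0^j-\bar{\vec{x}}_0^j=\vec{0}$ by construction, the initial extended error $\vec{z}_0^j$ lies in $\set{Z}_\infty$, so by robust positive invariance of $\set{Z}_\infty$ under~\eqref{eq:extended_error_system} we have $\vec{z}_t^j\in\set{Z}_\infty$ for all $t$, i.e.\ $\begin{pmatrix}\vec{I}&\vec{I}\end{pmatrix}\vec{z}_t^j\in\begin{pmatrix}\vec{I}&\vec{I}\end{pmatrix}\set{Z}_\infty$. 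Combining $\bar{\vec{x}}_t^j\to\vec{0}$ with $\vec{x}_t^j-\bar{\vec{x}}_t^j\in\begin{pmatrix}\vec{I}&\vec{I}\end{pmatrix}\set{Z}_\infty$ yields asymptotic convergence of $\vec{x}_t^j$ to $\begin{pmatrix}\vec{I}&\vec{I}\end{pmatrix}\set{Z}_\infty$ for every $\vec{w}\in\set{W}$, $\vec{v}\in\set{V}$. Constraint satisfaction for the true system along the way follows from the constraint tightening $\bar{\set{X}}=\set{X}\ominus\begin{pmatrix}\vec{I}&\vec{I}\end{pmatrix}\set{Z}_\infty$ and $\bar{\set{U}}=\set{U}\ominus\begin{pmatrix}\vec{0}&\vec{K}\end{pmatrix}\set{Z}_\infty$ together with $\vec{z}_t^j\in\set{Z}_\infty$.

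The main obstacle, and the place I would spend the most care, is the feasibility transfer across iterations when the initial estimate changes within the compact set $\set{E}_0$ rather than being reset exactly. Unlike the classical L-MPC setting where $\bar{\vec{x}}_0^{j+1}=\bar{\vec{x}}_0^j$ and the previous iteration's trajectory is itself a feasible warm start, here I must invoke Assumption~\ref{as:reachable_safe_set} to get feasibility from any point of $\set{E}_0$ at $j=1$ and then argue monotonicity $\set{CS}^{j-1}\subseteq\set{CS}^j$ so that a solution feasible with the smaller terminal set remains feasible with the larger one — this is why the corollary is stated as a corollary (of the referenced results in~\cite{Rosolia2018,Rosolia2021}) rather than a theorem, and why the performance claim in the abstract is phrased in terms of a non-increasing \emph{upper bound} rather than strict monotone improvement. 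A secondary technical point is verifying that the shifted-trajectory candidate indeed stays inside $\set{CS}^{j-1}$, which relies on $\set{CS}^{j-1}$ being control invariant for the LTI nominal system; this is cited from~\cite{Borrelli2017} and I would just reference it.
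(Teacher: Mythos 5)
Your plan is correct and follows essentially the same route as the paper, whose entire proof consists of citing Theorem~1 of~\cite{Rosolia2018} (the shifted-candidate recursive-feasibility and value-function-decrease argument for the nominal system with the convex safe set $\set{CS}^{j-1}$ and barycentric terminal cost $P^{j-1}$) and Proposition~2 of~\cite{Lorenzetti2020} (robust positive invariance of $\set{Z}_\infty$, which lifts nominal convergence to convergence of $\vec{x}_t^j$ into $\begin{pmatrix}\vec{I} & \vec{I}\end{pmatrix}\set{Z}_\infty$); you have in effect reconstructed the content of those two citations, including the one genuinely new ingredient here, namely using Assumption~\ref{as:reachable_safe_set} together with $\set{CS}^{j-1}\subseteq\set{CS}^{j}$ to transfer feasibility across iterations when $\hat{\vec{x}}_0^j$ varies over $\set{E}_0$. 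The only step to double-check is your claim that $\vec{e}_0^j\in\set{E}_\infty$ and $\vec{\xi}_0^j=\vec{0}$ imply $\vec{z}_0^j\in\set{Z}_\infty$ (the slice of $\set{Z}_\infty$ at $\vec{\xi}=\vec{0}$ need not contain all of $\set{E}_\infty$), but this initialization subtlety is inherited from the paper's own Assumption~\ref{as:safe_set_init} and is handled in the cited reference rather than in the paper itself.
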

\begin{proof}
	Under Assumptions~\ref{as:safe_set_init}~and~\ref{as:reachable_safe_set}, recursive feasibility and stability for every iteration $j$ follows from Theorem~1 in~\cite{Rosolia2018} and Proposition~2 in~\cite{Lorenzetti2020}.
\end{proof}

\subsubsection{Bound on Total Iteration Cost}
Next we present the results on the controller performance and derive upper bounds on the total nominal iteration cost.
We start by analyzing the performance over the different iterations $j$. 
We generalize the results in~\cite{Rosolia2018} by considering a set of initial states over different iterations $j$. For the case of a fixed initial condition (i.e., $\set{E}_0 = \{\hat{\vec{x}}_0^0\}$), we recover the result in~\cite{Rosolia2018} stating that the total iteration cost is  non-increasing. 
However, the total iteration cost is not guaranteed to be non-increasing for when the initial state changes from iteration to iteration. For linear systems with quadratic cost, the sublevel set of the optimal total iteration cost contains all initial conditions with less or equal total iteration cost. Any initial condition outside of this sublevel set results in an increased total iteration cost. 

In the following, we introduce a non-increasing upper bound on the total iteration cost by considering the set of all possible initial state estimates. 
Before presenting this upper bound, we introduce the  
greatest commonly $T$-step reachable set, which we define in terms of the $T$-step reachable set $\R_T(\cdot)$ \textcolor{black}{for the nominal system with tightened constraints}~\cite{Borrelli2017}.
%
%
\textcolor{black}{%
\begin{definition}[Greatest Commonly $T$-Step Reachable Set]%
	Let $\set{S} \subset \R^n$ be a compact polytopic set, $T \in \set{N}$, and $\mathrm{extreme}(\set{S})$ the set of vertices of $\set{S}$. Then under linear dynamics and polytopic state and input constraints the set 
	\begin{equation*}
		\bar{\R}_T(\set{S}) = \bigcap_{\vec{s} \in \mathrm{extreme}(\set{S})} \R_T (\vec{s}) \,,
	\end{equation*}
	is the greatest commonly $T$-step reachable set of $\set{S}$.
\end{definition}%
}%
\textcolor{black}{Intuitively, the greatest commonly $T$-step reachable set of a set $\set{S}$ is the maximal set of states that the system can reach after $T$ steps if starting from \textcolor{black}{all} $\vec{s} \in \set{S}$. Then Assumption~\ref{as:reachable_safe_set} guarantees that there exists $T \in \{1, \dots, H\}$ such that the intersection $\set{CS}^{0} \cap \bar{\set{R}}_T (\set{E}_{0})$ is non-empty. 
}%

The upper bound on the total cost for iteration $j > 0$ is
\begin{equation}
\label{eq:general_upper_bound}
\bar{V}_{0 \to \infty}^j(\set{E}_0) = 
\min_{\substack{\bar{\vec{x}} \in \set{CS}^{j - 1} \cap \bar{\set{R}}_T (\set{E}_0)}} 
V^{j, *}_{0 \to T}(\set{E}_0, \bar{\vec{x}}) + 
P^{j- 1} (\bar{\vec{x}}) \,,
\end{equation}
where $V^{j, *}_{0 \to T}(\set{E}_0, \bar{\vec{x}})$ is the solution to the following FTOCP:
\begin{align}
\label{eq:ftocp_upper_bound}
\raisetag{60pt}
\begin{split}
V^{j, *}_{0 \to T}(\set{E}_0, \bar{\vec{x}}) = 
\max_{\substack{\bar{\vec{x}}_{0 \vert 0}^j}}
\min_{\substack{\bar{\vec{u}}_{t:t + T -1 \vert 0}^j}}
 & \quad \sum_{k = 0}^{T - 1} \ell(\bar{\vec{x}}_{k \vert 0}^j, \bar{\vec{u}}_{k \vert 0}^{j})  \\
\mathrm{s.t.} & \quad \forall k \in \set{I}_{0, T-1} \,,  \\
& \quad \bar{\vec{x}}_{k + 1 \vert 0}^j =\vec{A}\bar{\vec{x}}_{k \vert 0}^j + \vec{B}\bar{\vec{u}}_{k \vert 0}^j\,, \\
& \quad \bar{\vec{x}}_{k \vert 0}^j \in \bar{\set{X}} \,, \bar{\vec{u}}_{k \vert 0}^j \in \bar{\set{U}} \,,  \\
& \quad \bar{\vec{x}}_{0 \vert 0}^j \in \set{E}_0\,, \bar{\vec{x}}_{T \vert 0}^j = \bar{\vec{x}} \,. 
\end{split}
\end{align}
The objective function~\eqref{eq:general_upper_bound} for $\bar{V}_{0 \to \infty}^j(\set{E}_0)$ is the sum of two optimal costs. The first term gives the worst-case nominal cost for any possible initial state estimate in the set $\set{E}_0$ to any feasible state in $\set{CS}^{j - 1}$. 
\textcolor{black}{The optimization problem in~\eqref{eq:ftocp_upper_bound} is feasible by Assumption~\ref{as:reachable_safe_set} and can be solved by applying line searches to find the initial nominal state on the boundary of $\set{E}_0$.}
The second term is the cost-to-go from the optimal intermediate state $\bar{\vec{x}}^*$ to the origin of the nominal system.  
\begin{theorem}
	\label{th:general_bound}
	Consider system~\eqref{eq:system},~\eqref{eq:luenberger} in closed-loop with the RLO-MPC controller~\eqref{eq:roflmpc_ftocp}~and~\eqref{eq:roflmpc_control_policy}. Let $\set{CS}^j$ be the convex safe set at the $j$-th iteration as defined in~\eqref{eq:conv_safe_set}. Under Assumptions~\ref{as:safe_set_init}~and~\ref{as:reachable_safe_set}, $\bar{V}_{0 \to \infty}^j(\set{E}_0)$ is an upper bound on the total iteration cost for iteration $j > 0$:
	\begin{equation}
	V_{0 \to \infty}^j(\hat{\vec{x}}_0^j) \leq \bar{V}_{0 \to \infty}^j(\set{E}_0) \,.
	\end{equation}
	Furthermore, the upper bound on the total iteration cost $\bar{V}_{0 \to \infty}^j$ is non-increasing:
	\begin{equation}
	\bar{V}_{0 \to \infty}^{j + 1}(\set{E}_0) \leq \bar{V}_{0 \to \infty}^j(\set{E}_0) \,, \forall j > 0  \,.
	\end{equation}
\end{theorem}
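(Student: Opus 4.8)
\emph{Plan.} The statement has two parts. For the upper-bound claim the plan is to combine (i)~the learning-MPC descent argument, which shows that the closed-loop total nominal cost of iteration $j$ is bounded by the value $V^{j}_{0\to H}(\hat{\vec{x}}_0^j)$ of the FTOCP~\eqref{eq:roflmpc_ftocp} solved at $t=0$, with (ii)~an explicitly constructed feasible input for that FTOCP whose cost equals $\bar{V}_{0\to\infty}^j(\set{E}_0)$. For the monotonicity claim the plan is to exploit that the convex safe set $\set{CS}^{j}$ grows and the terminal cost $P^{j}$ decreases with $j$, while the inner cost $V^{j,*}_{0\to T}$ of~\eqref{eq:ftocp_upper_bound} does not depend on $j$.

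\emph{Descent step.} Exactly as in the proof of Corollary~1 (Theorem~1 in~\cite{Rosolia2018}), using that $\set{CS}^{j-1}$ is control invariant for the nominal system~\eqref{eq:nominal_system} and that $P^{j-1}$ admits, for every $\bar{\vec{x}}\in\set{CS}^{j-1}$, an admissible $\bar{\vec{u}}$ with $\vec{A}\bar{\vec{x}}+\vec{B}\bar{\vec{u}}\in\set{CS}^{j-1}$ and $\ell(\bar{\vec{x}},\bar{\vec{u}}) + P^{j-1}(\vec{A}\bar{\vec{x}}+\vec{B}\bar{\vec{u}}) \le P^{j-1}(\bar{\vec{x}})$ (from~\cite{Rosolia2017}), the shift-and-append candidate yields $V^{j}_{t+1\to t+1+H}(\bar{\vec{x}}_{t+1}^j) \le V^{j}_{t\to t+H}(\bar{\vec{x}}_t^j) - \ell(\bar{\vec{x}}_t^j,\bar{\vec{u}}_t^j)$. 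Summing over $t\ge 0$ and using $V^{j}\ge 0$ together with $\bar{\vec{x}}_0^j = \hat{\vec{x}}_0^j$ gives $V_{0\to\infty}^j(\hat{\vec{x}}_0^j)\le V^{j}_{0\to H}(\hat{\vec{x}}_0^j)$.

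\emph{Feasible construction.} Let $\bar{\vec{x}}^\star$ attain the minimum in~\eqref{eq:general_upper_bound}. I would first prove the lemma that $\bar{\set{R}}_T(\set{E}_0) \subseteq \R_T(\hat{\vec{x}}_0^j)$ for every $\hat{\vec{x}}_0^j \in \set{E}_0$: writing $\hat{\vec{x}}_0^j$ as a convex combination of $\mathrm{extreme}(\set{E}_0)$ and, since $\bar{\vec{x}}^\star \in \R_T(\vec{s})$ for every vertex $\vec{s}$, taking the corresponding convex combination of the vertex-to-$\bar{\vec{x}}^\star$ input sequences produces, by linearity of~\eqref{eq:nominal_system} and convexity of $\bar{\set{X}},\bar{\set{U}}$, an admissible $T$-step nominal trajectory from $\hat{\vec{x}}_0^j$ to $\bar{\vec{x}}^\star$ whose cost is at most $V^{j,*}_{0\to T}(\set{E}_0,\bar{\vec{x}}^\star)$, the latter being the worst case over $\set{E}_0$. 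Next, from $\bar{\vec{x}}^\star\in\set{CS}^{j-1}$ I would extend this to a length-$H$ input by applying, for steps $T,\dots,H-1$, the $P^{j-1}$-greedy policy above, which keeps the state in $\set{CS}^{j-1}$ and, telescoping the decrease inequality, makes the stage cost accrued from $T$ to $H-1$ plus $P^{j-1}$ of the terminal state at most $P^{j-1}(\bar{\vec{x}}^\star)$. Since $\set{CS}^{j-1}\subseteq\bar{\set{X}}$, the concatenated input is feasible for~\eqref{eq:roflmpc_ftocp} at $t=0$, so $V^{j}_{0\to H}(\hat{\vec{x}}_0^j) \le V^{j,*}_{0\to T}(\set{E}_0,\bar{\vec{x}}^\star)+P^{j-1}(\bar{\vec{x}}^\star) = \bar{V}_{0\to\infty}^j(\set{E}_0)$; combined with the descent step this gives $V_{0\to\infty}^j(\hat{\vec{x}}_0^j) \le \bar{V}_{0\to\infty}^j(\set{E}_0)$.

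\emph{Monotonicity, and the main obstacle.} For the second claim I would use three facts: $\set{CS}^{j-1}\subseteq\set{CS}^{j}$ (as $\set{M}^{j-1}\subseteq\set{M}^{j}$ implies $\set{SS}^{j-1}\subseteq\set{SS}^{j}$); $P^{j}(\bar{\vec{x}})\le P^{j-1}(\bar{\vec{x}})$ for $\bar{\vec{x}}\in\set{CS}^{j-1}$ (padding a multiplier feasible for~\eqref{eq:cost_to_go} at level $j-1$ with zeros gives one at level $j$ with identical objective); and $V^{j+1,*}_{0\to T}(\set{E}_0,\cdot)=V^{j,*}_{0\to T}(\set{E}_0,\cdot)$ (the program~\eqref{eq:ftocp_upper_bound} is independent of $j$). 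Letting $\bar{\vec{x}}_j^\star$ minimize~\eqref{eq:general_upper_bound} for iteration $j$, the first fact gives $\bar{\vec{x}}_j^\star\in\set{CS}^{j}\cap\bar{\set{R}}_T(\set{E}_0)$, so $\bar{\vec{x}}_j^\star$ is feasible for the iteration-$(j+1)$ minimization and $\bar{V}^{j+1}_{0\to\infty}(\set{E}_0)\le V^{j,*}_{0\to T}(\set{E}_0,\bar{\vec{x}}_j^\star)+P^{j}(\bar{\vec{x}}_j^\star)\le V^{j,*}_{0\to T}(\set{E}_0,\bar{\vec{x}}_j^\star)+P^{j-1}(\bar{\vec{x}}_j^\star)=\bar{V}^{j}_{0\to\infty}(\set{E}_0)$. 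I expect the delicate step to be the length-$T$-to-length-$H$ bookkeeping in the feasible construction: proving the reachable-set convexity lemma for \emph{every} admissible initial estimate, and checking that padding with the $P^{j-1}$-greedy policy produces a trajectory that is at once constraint-admissible, terminates in $\set{CS}^{j-1}$, and costs no more than $V^{j,*}_{0\to T}(\set{E}_0,\bar{\vec{x}}^\star)+P^{j-1}(\bar{\vec{x}}^\star)$; the monotonicity part is then routine.
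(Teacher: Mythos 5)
Your proof is correct and follows essentially the same route as the paper: bound the closed-loop cost by the cost of a suboptimal candidate that reaches $\set{CS}^{j-1}$ in $T$ steps and then pays $P^{j-1}$, worst-case the initial condition over $\set{E}_0$, and obtain monotonicity from $\set{CS}^{j}\subseteq\set{CS}^{j+1}$, $P^{j}\leq P^{j-1}$, and the $j$-independence of~\eqref{eq:ftocp_upper_bound}. You actually supply more detail than the paper does — in particular the convexity lemma $\bar{\set{R}}_T(\set{E}_0)\subseteq\set{R}_T(\hat{\vec{x}}_0^j)$ justifying the passage from~\eqref{eq:ineq_bound_1} to~\eqref{eq:ineq_bound_2}, and the $T$-to-$H$ padding with the $P^{j-1}$-greedy policy, both of which the paper leaves implicit.
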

\begin{proof}
	We use the fact that the combination of the FTOCP solution for $V^{j, *}_{0 \to T}(\set{E}_0, \bar{\vec{x}})$ and the iteration cost from any previous iteration for $P^{j- 1} (\bar{\vec{x}})$ is suboptimal. Optimality holds only if $\bar{\vec{x}}_T^j = \bar{\vec{x}}^i_t$ \textcolor{black}{for any $i \in \set{I}_{0, j - 1}$}. This yields 
	\begin{align}
	\begin{split}
	\label{eq:ineq_bound_1}
	\raisetag{30pt}
	V_{0 \to \infty}^j(\hat{\vec{x}}_0^j) &\leq  V_{0 \to T}^j(\hat{\vec{x}}_0^j) + 
	P^j(\bar{\vec{x}}_T^j)  \\
	& \textcolor{black}{\leq  V_{0 \to T}^{j, *}(\hat{\vec{x}}_0^j, \bar{\vec{x}}_T^j) + 	P^j(\bar{\vec{x}}_T^j)}  \\
	&\leq \min_{\substack{\bar{\vec{x}} \in \set{CS}^{j - 1} \cap \set{R}_T (\hat{\vec{x}}_0^j)}} V^{j, *}_{0 \to T}(\hat{\vec{x}}_0^j, \bar{\vec{x}}) + P^{j - 1}(\bar{\vec{x}}) \,.
	\end{split}
	\end{align}
	In~\eqref{eq:ineq_bound_1} the total iteration cost is bounded using the initial state estimate $\hat{\vec{x}}_0^j$. An upper bound on the total iteration cost can be obtained by maximizing the cost over all possible initial state estimates in the set $\set{E}_0$:
	\begin{align}
	\begin{split}
	\label{eq:ineq_bound_2}
	\raisetag{20pt}
	V_{0 \to \infty}^j(\hat{\vec{x}}_0^j) &\leq \min_{\bar{\vec{x}} \in \set{CS}^{j - 1} \cap \bar{\set{R}}_T (\set{E}_0)} V^{j, *}_{0 \to T}(\set{E}_0, \bar{\vec{x}}) + P^{j - 1}(\bar{\vec{x}})  \\
	&= \bar{V}_{0 \to \infty}^j(\set{E}_0) \,.
	\end{split}
	\end{align}
	To show $\bar{V}_{0 \to \infty}^j(\set{E}_0)$ is non-increasing over all iterations $j > 0$,
	we first note that the definition of the convex safe set yields 
	$\set{CS}^j \subseteq \set{CS}^{j + 1}$.
	By Assumptions~\ref{as:safe_set_init}~and~\ref{as:reachable_safe_set}, the FTOCP problem~\eqref{eq:roflmpc_ftocp} is feasible at $j = 0$. It follows that 
	\begin{equation}
	\label{eq:recursive_intersection}
	\set{CS}^{j} \cap \bar{\set{R}}_T (\set{E}_0) \subseteq \set{CS}^{j + 1} \cap \bar{\set{R}}_T (\set{E}_0) \,,
	\end{equation}
	which implies that the problem is feasible for all subsequent iterations $j > 0$.
	Since the FTOCP in~\eqref{eq:ftocp_upper_bound} \textcolor{black}{includes a maximization over a quadratic function} with polytopic constraints on the initial state, the solution $\bar{\vec{x}}_0^{j, *}$ at iteration $j$ will lie on the boundary of the polytope $\set{E}_0$, such that $\bar{\vec{x}}_0^{j, *} \in \partial \set{E}_0$. 
	If the iteration $j$ yields an optimal intermediate state $\bar{\vec{x}}^*$ that decreases the total iteration cost, the upper bound on the total iteration will decrease as well. Otherwise the upper bound on the total iteration cost will be constant. Then~\eqref{eq:recursive_intersection},~\eqref{eq:ftocp_upper_bound} and the definition of $P^j$~\eqref{eq:cost_to_go} yield:
	\begin{align*}
	\begin{split}
	\bar{V}_{0 \to \infty}^{j + 1}(\set{E}_0) &= \min_{\substack{\bar{\vec{x}} \in \set{CS}^{j} \cap \bar{\set{R}}_T (\set{E}_0)}} V^{j + 1, *}_{0 \to T}(\set{E}_0, \bar{\vec{x}}) + P^{j} (\bar{\vec{x}})  \\
	&\leq \min_{\substack{\bar{\vec{x}} \in \set{CS}^{j - 1} \cap \bar{\set{R}}_T (\set{E}_0)}} V^{j, *}_{0 \to T}(\set{E}_0, \bar{\vec{x}}) + P^{j-1} (\bar{\vec{x}})  \\
	&= \bar{V}_{0 \to \infty}^{j}(\set{E}_0)\,,
	\end{split}
	\end{align*}
	which gives the desired result.
\end{proof}

In the proof of Theorem~\ref{th:general_bound}, we are applying two inequalities~\eqref{eq:ineq_bound_1}~and~\eqref{eq:ineq_bound_2}, which indicates that we can state a second, tighter upper bound using only the first inequality. While this bound is tighter, it is not guaranteed to be non-increasing and requires knowledge of the new initial state estimate. Define
\begin{equation}
\label{eq:lower_upper_bound}
\underbar{V}_{0 \to \infty}^j(\hat{\vec{x}}_0^j) = \min_{\substack{\bar{\vec{x}} \in \set{CS}^{j - 1} \cap \set{R}_T (\hat{\vec{x}}_0^j)}} V^{j, *}_{0 \to T}(\hat{\vec{x}}_0^j, \bar{\vec{x}}) + P^{j - 1} (\bar{\vec{x}}) \,.
\end{equation}
Then we can state the following Corollary:
\begin{corollary}
	\label{co:lower_upper_bound}
	Consider system~\eqref{eq:system} in closed-loop with the RLO-MPC controller~\eqref{eq:roflmpc_ftocp}~and~\eqref{eq:roflmpc_control_policy}. Let $\set{CS}^j$ be the convex safe set at the $j$-th iteration as defined in~\eqref{eq:conv_safe_set}. Let Assumptions~\ref{as:safe_set_init}~and~\ref{as:reachable_safe_set} hold, then $\underbar{V}_{0 \to \infty}^j(\hat{\vec{x}}_0^j)$ is an upper bound on the total iteration cost for iteration $j > 0$:
	\begin{equation*}
	V_{0 \to \infty}^j(\hat{\vec{x}}_0^j) \leq \underbar{V}_{0 \to \infty}^j(\hat{\vec{x}}_0^j) \,.
	\end{equation*}
\end{corollary}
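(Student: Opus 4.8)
The statement is precisely the content of the first chain of inequalities~\eqref{eq:ineq_bound_1} in the proof of Theorem~\ref{th:general_bound}: by definition, $\underbar{V}_{0 \to \infty}^j(\hat{\vec{x}}_0^j)$ in~\eqref{eq:lower_upper_bound} is the right-hand side of its third line. The plan is therefore to reproduce that portion of the argument and simply stop before the worst-case maximization over $\set{E}_0$ used in~\eqref{eq:ineq_bound_2}. First I would invoke the recursive feasibility and stability result established above (which relies on Theorem~1 of~\cite{Rosolia2018} and Proposition~2 of~\cite{Lorenzetti2020}): under Assumptions~\ref{as:safe_set_init} and~\ref{as:reachable_safe_set} the RLO-MPC~\eqref{eq:roflmpc_ftocp},~\eqref{eq:roflmpc_control_policy} is feasible for all $t \ge 0$ at iteration $j$ and the realized nominal trajectory $\{\bar{\vec{x}}_t^j, \bar{\vec{u}}_t^j\}$ converges to the origin, so that the total nominal iteration cost $V_{0 \to \infty}^j(\hat{\vec{x}}_0^j) = \sum_{k = 0}^\infty \ell(\bar{\vec{x}}_k^j, \bar{\vec{u}}_k^j)$ is well defined and finite.

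Next I would bound $V_{0 \to \infty}^j(\hat{\vec{x}}_0^j)$ by the cost of a particular feasible, and hence suboptimal, completion of iteration $j$ starting from $\hat{\vec{x}}_0^j$. By Assumption~\ref{as:reachable_safe_set} the intersection $\set{CS}^{j-1} \cap \set{R}_T(\hat{\vec{x}}_0^j)$ is non-empty for some $T \in \{1, \dots, H\}$. For any $\bar{\vec{x}}$ in this intersection, steer the nominal system~\eqref{eq:nominal_system} from $\hat{\vec{x}}_0^j$ to $\bar{\vec{x}}$ in $T$ steps along a cost-minimizing trajectory that satisfies the tightened constraints $\bar{\set{X}}$, $\bar{\set{U}}$ --- by definition this costs $V_{0 \to T}^{j,*}(\hat{\vec{x}}_0^j, \bar{\vec{x}})$ as in~\eqref{eq:ftocp_upper_bound} --- and then, from $\bar{\vec{x}} \in \set{CS}^{j-1}$, continue along the convex combination of stored successful trajectories that attains the barycentric value $P^{j-1}(\bar{\vec{x}})$ in~\eqref{eq:cost_to_go}, which stays in $\set{CS}^{j-1} \subseteq \bar{\set{X}}$ and converges to the origin. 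The concatenated trajectory is feasible for the iterative infinite horizon problem and has cost $V_{0 \to T}^{j,*}(\hat{\vec{x}}_0^j, \bar{\vec{x}}) + P^{j-1}(\bar{\vec{x}})$; by the receding-horizon performance argument underlying Theorem~1 of~\cite{Rosolia2018} --- namely that the RLO-MPC closed-loop iteration cost is no larger than the cost of any feasible candidate built from the terminal ingredients $\set{CS}^{j-1}$, $P^{j-1}$ --- together with $\set{CS}^{j-1} \subseteq \set{CS}^j$ and the resulting monotonicity $P^j(\bar{\vec{x}}) \le P^{j-1}(\bar{\vec{x}})$ on $\set{CS}^{j-1}$, we obtain $V_{0 \to \infty}^j(\hat{\vec{x}}_0^j) \le V_{0 \to T}^{j,*}(\hat{\vec{x}}_0^j, \bar{\vec{x}}) + P^{j-1}(\bar{\vec{x}})$. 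Minimizing the right-hand side over $\bar{\vec{x}} \in \set{CS}^{j-1} \cap \set{R}_T(\hat{\vec{x}}_0^j)$ gives exactly $\underbar{V}_{0 \to \infty}^j(\hat{\vec{x}}_0^j)$ from~\eqref{eq:lower_upper_bound}, which is the claim.

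The main obstacle is the bookkeeping that reconciles the fixed prediction horizon $H$ of~\eqref{eq:roflmpc_ftocp} with the reachability step $T \le H$: when $T < H$ one must argue that the suboptimal candidate can be extended over the remaining $H - T$ steps while staying feasible and that the resulting shift-and-append sequence still yields the telescoping Lyapunov bound --- this is where control invariance of $\set{CS}^{j-1}$ for the nominal system and the terminal-cost decrease property of $P^{j-1}$ from~\cite{Rosolia2017, Rosolia2018}, together with the tube-based constraint tightening certified by Proposition~2 of~\cite{Lorenzetti2020}, enter. It is worth stating explicitly, as the paper does, that unlike $\bar{V}_{0 \to \infty}^j(\set{E}_0)$ this bound depends on the realized initial estimate $\hat{\vec{x}}_0^j$ through both $V_{0 \to T}^{j,*}(\hat{\vec{x}}_0^j, \cdot)$ and $\set{R}_T(\hat{\vec{x}}_0^j)$; consequently no monotonicity in $j$ is asserted, and it is tighter than $\bar{V}_{0 \to \infty}^j(\set{E}_0)$ only because it omits the worst-case maximization over $\set{E}_0$ in~\eqref{eq:ineq_bound_2}.
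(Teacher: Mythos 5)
Your proposal is correct and takes exactly the same route as the paper: the paper's proof of Corollary~\ref{co:lower_upper_bound} is a one-line appeal to the chain of inequalities~\eqref{eq:ineq_bound_1} in the proof of Theorem~\ref{th:general_bound}, stopping before the worst-case maximization over $\set{E}_0$, which is precisely what you do. Your additional elaboration of why~\eqref{eq:ineq_bound_1} holds (the suboptimal candidate obtained by concatenating the $T$-step FTOCP trajectory with the stored trajectories realizing $P^{j-1}$, together with the standard LMPC telescoping argument) is consistent with, and merely more explicit than, the paper's argument.
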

\begin{proof}
	The proof of Corollary~\ref{co:lower_upper_bound} follows from~\eqref{eq:ineq_bound_1} in the proof of Theorem~\ref{th:general_bound}.
\end{proof}
The results from Theorem~\ref{th:general_bound} and Corollary~\ref{co:lower_upper_bound} are connected through the following relationship:
\begin{equation*}
V_{0 \to \infty}^j(\hat{\vec{x}}_0^j) \leq \underbar{V}_{0 \to \infty}^j(\hat{\vec{x}}_0^j) \leq \bar{V}_{0 \to \infty}^j(\set{E}_0) \,.
\end{equation*}

\section{ILLUSTRATIVE EXAMPLES}
\label{sec:experiments}
This section shows examples for the proposed RLO-MPC presented in Sec.~\ref{sec:roflmpc}.
\subsection{Simulation Example}
Consider the following double integrator system used in~\cite{Mayne2006}:
\begin{align*}
	\begin{split}
	\vec{x}_{k + 1} &=\begin{pmatrix}
	1 & 1 \\
	0 & 1 
	\end{pmatrix}\vec{x}_{k} + \begin{pmatrix}
	1 \\
	1
	\end{pmatrix}\vec{u}_{k} + \vec{w}_k \\
	\vec{y}_{k} &=\begin{pmatrix}
	1 & 1
	\end{pmatrix}\vec{x}_{k} + \vec{v}_{k} \,,
	\end{split}
\end{align*}
where $\set{X} = \{\begin{pmatrix}
x_1 & x_2
\end{pmatrix}^\intercal \in \R^2 : x_1 \in \left[-50, 3\right], x_2 \in \left[-50, 3\right] \}$, $\set{U} = \{ u \in \R : u \in \left[-3, 3\right] \}$, $\set{W} = \{\vec{w} \in \R^2 : \lVert \vec{w} \rVert_\infty \leq 0.1 \}$, and 
$\set{V} = \{v \in \R : \lvert v \rvert \leq 0.05 \}$. We pick $\set{E}_0 = (-8.7~ -0.5)^\intercal \oplus \{\vec{e} \in \R^2 : \lVert \vec{e} \rVert_\infty \leq 0.2 \}$ as the tolerance for the initial state estimate for each iteration. 
For all experiments the RLO-MPC is implemented with a horizon of $H = 2$, the controller gain $\vec{K} = \begin{pmatrix}
-1 & -1
\end{pmatrix}$ and the observer gain $\vec{L} = \begin{pmatrix}
-1 & -1
\end{pmatrix}^\intercal$. We run the controller for $N = 20$ time steps for 30 iterations \textcolor{black}{for computational feasibility but this is} sufficient for the nominal system to reach the origin. For the cost function of \eqref{eq:roflmpc_ftocp} we have $\vec{Q} = \vec{I}$ and $\vec{R} = 0.01$. 
For this example, the minimally robust positively invariant sets $\set{E}_\infty$, $\Xi_\infty$, $\set{Z}_\infty$ can be determined exactly, because the matrices $\vec{A}_L$, $\vec{A}_K$, and $\vec{F}$ are nilpotent~\cite{Rakovic2005}. 
The set $\set{CS}^0$ is initialized by using the robust output feedback MPC approach by~\cite{Kogel2017}, with horizon $H_{\mathrm{init}} = 5$, $N_{\mathrm{init}} = 50$, $\vec{Q}_{\mathrm{init}} = \mathrm{diag}(0.1, 0.1)$, $\vec{R}_{\mathrm{init}} =10$. The terminal cost function $V_f(\vec{x}) = \vec{x}^\intercal \vec{P} \vec{x}$ and the terminal constraint set $\bar{\set{X}}_{f}$ are calculated using $\vec{P}$ and $\vec{K}_f$ from the associated unconstrained LQR problem for the nominal system~\cite{Kouvaritakis2015}. The choice of $\vec{Q}_{\mathrm{init}}$ and $\vec{R}_{\mathrm{init}}$ is such that the performance improvement of the robust output feedback controller is clearly visible. The initial $P^{0} (\cdot)$ is determined with the actual weight matrices $\vec{Q}$ and $\vec{R}$. The experiments are implemented in MATLAB using MPT3~\cite{Herceg2013} and YALMIP~\cite{Lofberg2004} with the solvers Gurobi~\cite{gurobi} and BMIBNB.

\subsubsection{Bounded Total Iteration Cost}
\label{sec:exp_bounds}
We validate the upper bounds on the total iteration cost through simulation. We run the controller 50 times with each run having 30 iterations. An additional superscript $r \in \{1, \dots, 50\}$ indicates the index of the run. Each run and iteration is initialized with a random initial state estimate $\hat{\vec{x}}_0^{j, r} \in \set{E}_0$ using rejection sampling. During this experiment, we assume that the uncertainties $\vec{w}_k^{j, r} \in \mathrm{extreme(\set{W})}$ and $\vec{v}_k^{j, r} \in \mathrm{extreme(\set{V})}$, such that the maximum admissible uncertainties are applied to the system. We pick $T = H$ for this set of simulations. 

The total nominal iteration cost for all runs and the minimum non-increasing upper bound over all runs are shown in Figure~\ref{fig:upper_bounds_boxplot}. As expected, the nominal total iteration cost $V^{j, r}_{0 \to N}(\hat{\vec{x}}_0^{j, r})$ is not guaranteed to be non-increasing due to the variation of initial states across different iterations. However, all iterations of the RLO-MPC have a lower total nominal iteration cost than the initial trajectory for each run. The graph confirms that the relationship $V^j_{0 \to N}(\hat{\vec{x}}_0^{j, r}) \leq \bar{V}^j_{0 \to N}(\set{E}_0)$ holds. The upper bound $\bar{V}^j_{0 \to N}(\set{E}_0)$ is conservative, but the simulation validates that it is non-increasing. This upper bound also shows that the total nominal iteration cost is guaranteed to be lower than the total nominal iteration cost for the initial trajectory from the second iteration onwards. 
Figure~\ref{fig:lower_bound_boxplot} shows the iteration cost prediction error $\Delta V^{j, r}_{0 \to N}(\hat{\vec{x}}_0^{j, r}) = \underbar{V}^{j, r}_{0 \to N}(\hat{\vec{x}}_0^{j, r}) - V^{j, r}_{0 \to N}(\hat{\vec{x}}_0^j)$, which is the difference between the tighter upper bound~\eqref{eq:lower_upper_bound} and the nominal total iteration cost. The error is always nonnegative, which verifies $V_{0 \to \infty}^j(\hat{\vec{x}}_0^{j, r}) \leq \underbar{V}_{0 \to \infty}^j(\hat{\vec{x}}_0^{j, r})$. From iteration~8 onwards the median of the iteration cost prediction error $\Delta V^j_{0 \to N}(\hat{\vec{x}}_0^{j, r})$ over all runs is approximately $0$. Therefore, the tighter upper bound of the total iteration cost $\underbar{V}^{j, r}_{0 \to N}(\hat{\vec{x}}_0^{j, r})$ is indistinguishable from the true total nominal iteration cost $V^j_{0 \to N}(\hat{\vec{x}}_0^{j, r})$ for most runs. 
The tighter upper bound allows us to accurately determine the closed-loop performance of the proposed RLO-MPC for an increasing iteration count.

\begin{figure}[tb]
	\centering
	\input{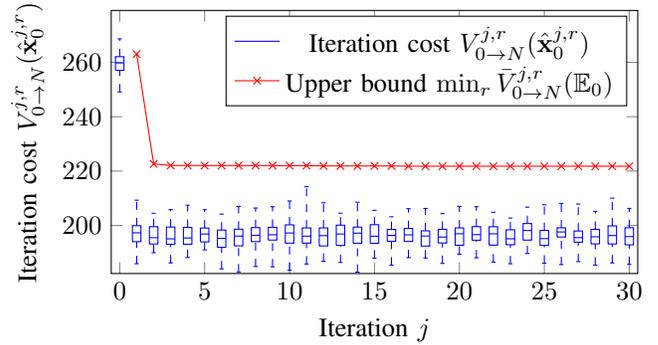}
	\caption{The total nominal iteration cost $V^{j, r}_{0 \to N}(\hat{\vec{x}}_0^{j, r})$ and the derived upper bound $\bar{V}^j_{0 \to N}(\set{E}_0)$ for iterations $j \in \{0, \dots, 30\}$ and runs $r \in \{1, \dots, 50\}$.
	We find that the total nominal iteration cost $V^{j, r}_{0 \to N}(\hat{\vec{x}}_0^{j, r})$ is not guaranteed to be non-increasing in this setting. The graph validates the proposed upper bound of the total nominal iteration cost $\bar{V}^j_{0 \to N}(\set{E}_0)$. Although this upper bound is conservative, it is non-increasing. } 
	\label{fig:upper_bounds_boxplot}
\end{figure}

\begin{figure}[tb]
	\centering
	\input{lower_bound_boxplot.tex}
	\caption{The iteration cost prediction error $\Delta V^j_{0 \to N}(\hat{\vec{x}}_0^{j, r}) = \underbar{V}^j_{0 \to N}(\hat{\vec{x}}_0^{j, r}) - V^j_{0 \to N}(\hat{\vec{x}}_0^{j, r})$ for iterations $j \in \{0, \dots, 30\}$ and runs $r \in \{1, \dots, 50\}$. The graph validates that the tighter upper bound $\underbar{V}^j_{0 \to N}(\hat{\vec{x}}_0^{j, r})$ is indeed an upper bound. Furthermore, the median of the prediction error converges to~$0$. Therefore, the tighter upper bound allows us to accurately determine the performance of the closed-loop system.} 
	\label{fig:lower_bound_boxplot}
\end{figure}

\subsubsection{Comparison to Certainty Equivalence Approach}
We now compare the performance of an RL-MPC and the proposed RLO-MPC. We consider a system, where full-state feedback is not available and an observer has to be used. Since the RL-MPC requires full-state feedback, we apply the certainty equivalence principle. We assume that the state estimate is equal to the true state with $\hat{\vec{x}}_t^j = \vec{x}_t^j$. The state estimate is again given by the Luenberger observer~\eqref{eq:luenberger}. For the RL-MPC we consider the robust tube-based MPC approach described in~\cite{Mayne2005}, where we keep the initial state fixed. Then the state and control input constraints are tightened using the minimally robust positive invariant set $\set{S}_\infty$ that satisfies $\set{S}_\infty = \vec{A}_K \set{S}_\infty \oplus \set{W}$. As in Sec.~\ref{sec:exp_bounds}, we run the controller for 50 different runs with 30 iterations. Furthermore, this experiment uses the same initial states, initial state estimates, and the same process and measurement noise as applied to the system in the experiment in Sec.~\ref{sec:exp_bounds}. 

In Figure~\ref{fig:violating_constraints_histo} we show the number of state constraint violations for the certainty equivalent approach for each run. Every run has a minimum of 5 and a maximum of 15 constraint violations. On average, almost 10 out of 30 iterations lead to constraint violations for the certainty equivalent approach. An unsuccessful trajectory of the certainty equivalent RL-MPC is displayed in Figure~\ref{fig:violating_constraints}. This trajectory from the third iteration of the first run violates the state constraint at step 2. We highlight that the constraint is violated, although the tube is contained inside $\set{X}$. Therefore, not accounting for state estimation errors in RL-MPC can lead to constraint violation. In contrast, the proposed RLO-MPC satisfies all state and input constraints for every iteration in all runs. 


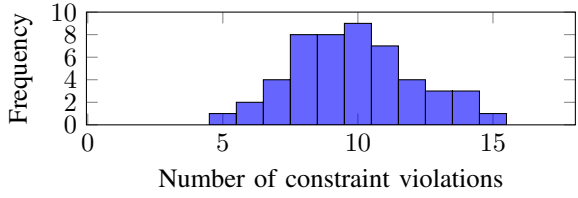
\begin{figure}[tb]
	\centering
%
%
\definecolor{mycolor1}{rgb}{0.00000,0.0,1.0}%
\begin{tikzpicture}

\begin{axis}[%
width=6.5cm,
height=1.5cm,
at={(0.758in,0.481in)},
scale only axis,
xmin=-0.05,
xmax=18.05,
ymin=0,
ymax=10,
xlabel={Number of constraint violations},
ylabel={Frequency},
axis background/.style={fill=white}
]
\addplot[ybar interval, fill=blue, fill opacity=0.6, draw=black, area legend] table[row sep=crcr] {%
x	y\\
4.5	1\\
5.5	2\\
6.5	4\\
7.5	8\\
8.5	8\\
9.5	9\\
10.5	7\\
11.5	4\\
12.5	3\\
13.5	3\\
14.5	1\\
15.5	1\\
};
\end{axis}

\end{tikzpicture}%
	\caption{The number of constraint violations of one experiment run of the certainty equivalent RL-MPC summed over all 30 iterations. The proposed RLO-MPC does not violate any constraints for any run.} 
	\label{fig:violating_constraints_histo}
\end{figure}

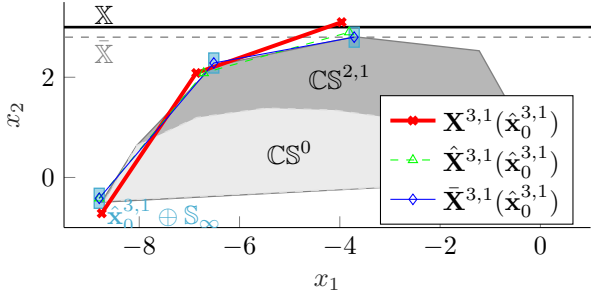
\begin{figure}[tb]
	\centering
%
%
\definecolor{mycolor1}{rgb}{0.00000,1.00000,1.00000}%
\begin{tikzpicture}

\begin{axis}[%
width=7.0cm,
height=3.0cm,
at={(0.758in,0.481in)},
scale only axis,
xmin=-9.5,
xmax=1,
xlabel style={font=\color{white!15!black}},
xlabel={$x_1$},
ymin=-1,
ymax=3.5,
ylabel style={font=\color{white!15!black}},
ylabel={$x_2$},
axis background/.style={fill=white},
axis x line*=bottom,
axis y line*=left,
legend style={at={(0.6,0.3)},anchor=west},
]

\addplot[area legend, line width=0.5pt, draw=gray, fill=white!25.0!gray!75, fill opacity=1.0, forget plot]
table[row sep=crcr] {%
x	y\\
0.157855405218363	0.0703156766799538\\
0.180875604844815	0.0230201996264512\\
0.174969103356666	-0.00590650148814831\\
0.153450106363749	-0.021518996992917\\
0.125411997831009	-0.0280381085327406\\
-8.7551744268349	-0.494237561865138\\
-8.04950659688924	0.639250902588891\\
-6.54941198884056	2.20576243799434\\
-6.42853594175251	2.24657896929688\\
-3.74941198888767	2.79999999995289\\
-3.62853594176444	2.79999999998807\\
-1.22072258378171	2.52868940510596\\
}--cycle;

\addplot[area legend, dashed, line width=0.5pt, draw=white!50.0!gray!50, fill=white!60.0!gray!40.0, fill opacity=1.0, forget plot]
table[row sep=crcr] {%
x	y\\
-8.68875749947814	-0.472070392796104\\
-8.04950659688924	0.639250902588891\\
-6.85334990136685	1.1961566955224\\
-5.46924529271196	1.38410460865489\\
-4.12372388105031	1.34552141166165\\
-2.93984867941637	1.18387520163394\\
-1.96983523177781	0.970013447638557\\
-1.22072258378631	0.749112647991502\\
-0.673474614100328	0.547247969685984\\
-0.296408479628056	0.377066134472273\\
-0.0540434946271233	0.242364985000933\\
0.0875397285384096	0.141583223165533\\
0.157855405218363	0.0703156766799538\\
0.180875604844815	0.0230201996264512\\
0.174969103356666	-0.00590650148814831\\
0.153450106363749	-0.021518996992917\\
0.125411997831009	-0.0280381085327406\\
}--cycle;

\addplot[area legend, line width=1.0pt, draw=black, fill=red, fill opacity=0, forget plot]
table[row sep=crcr] {%
x	y\\
3	3\\
3	-50\\
-50	-50\\
-50	3\\
}--cycle;

\addplot[area legend, line width=0.5pt, dashed, draw=gray, fill=red, fill opacity=0, forget plot]
table[row sep=crcr] {%
x	y\\
2.9	-49.8000000000002\\
-49.9000000000002	-49.8000000000002\\
-49.9000000000002	2.8\\
2.9	2.8\\
}--cycle;

\addplot[area legend, line width=0.5pt, draw=black!25.0!cyan!75, fill=black!25.0!cyan!75, fill opacity=0.5, forget plot]
table[row sep=crcr] {%
x	y\\
-8.69792673273328	-0.61551412806366\\
-8.89792673273328	-0.61551412806366\\
-8.89792673273328	-0.21551412806366\\
-8.69792673273328	-0.21551412806366\\
}--cycle;

\addplot[area legend, line width=0.5pt, draw=black!25.0!cyan!75, fill=black!25.0!cyan!75, fill opacity=0.5, forget plot]
table[row sep=crcr] {%
x	y\\
-6.41344086079725	2.08448587193603\\
-6.41344086079725	2.48448587193603\\
-6.61344086079725	2.48448587193603\\
-6.61344086079725	2.08448587193603\\
}--cycle;

\addplot[area legend, line width=0.5pt, draw=black!25.0!cyan!75, fill=black!25.0!cyan!75, fill opacity=0.5, forget plot]
table[row sep=crcr] {%
x	y\\
-3.61344086080059	2.59999999999666\\
-3.61344086080059	2.99999999999666\\
-3.81344086080059	2.99999999999666\\
-3.81344086080059	2.59999999999666\\
}--cycle;

\addplot [color=red, solid, line width=1.5pt, mark=x, mark options={solid, red}]
  table[row sep=crcr]{%
-8.74792673273328	-0.71551412806366\\
-6.86344086079725	2.08448587193603\\
-3.96344086080059	3.09999999999665\\
};
\addlegendentry{$\vec{X}^{3, 1}(\hat{\vec{x}}_0^{3, 1})$}
\addplot [color=green, dashed, mark=triangle, mark options={solid, green}]
  table[row sep=crcr]{%
-8.79792673273328	-0.41551412806366\\
-6.71344086079725	2.08448587193603\\
-3.81344086080059	2.89999999999665\\
};
\addlegendentry{$\hat{\vec{X}}^{3, 1}(\hat{\vec{x}}_0^{3, 1})$}
\addplot [color=blue, mark=diamond, mark options={blue}]
  table[row sep=crcr]{%
-8.79792673273328	-0.41551412806366\\
-6.51344086079725	2.28448587193603\\
-3.71344086080059	2.79999999999666\\
};
\addlegendentry{$\bar{\vec{X}}^{3, 1}(\hat{\vec{x}}_0^{3, 1})$}
\node[] at (axis cs: -5, 0.5) {$\mathbb{CS}^0 $};
\node[] at (axis cs: -4.0, 2.0) {$\mathbb{CS}^{2, 1} $};
\node[] at (axis cs: -8.7, 3.25) {$\mathbb{X}$};
\node[] at (axis cs: -8.7, 2.5) {$\textcolor{gray}{\bar{\mathbb{X}}}$};
\node[] at (axis cs: -7.5, -0.75) {$\textcolor{black!25.0!cyan!75}{\hat{\vec{x}}_0^{3, 1} \oplus \set{S}_\infty}$};
\end{axis}

\end{tikzpicture}%
	\caption{Unsuccessful trajectory of the certainty equivalent RL-MPC at iteration $j = 3$ of the first run. The state constraint is violated at the second time step of the iteration. } 
	\label{fig:violating_constraints}
\end{figure}

\subsection{Quadrotor Experiments}
We run a one-dimensional stabilization task on a Crazyflie~2.0 quadrotor.  
Our goal is to design reference signals with the proposed approach such that the quadrotor is driven to a neighborhood around the origin without violating the state and input constraints.
The quadrotor tracks the reference signals using a PD position controller. 
The position of the quadrotor is measured by a Vicon system at 300Hz. 
The quadrotor dynamics are nonlinear. However, by using the underlying PD controller and operating at low velocities, we can approximate the closed-loop dynamics of the quadrotor as a decoupled uncertain LTI system in the $x$-, $y$-, $z$-direction. In the experiment, we focus on the dynamics along the horizontal $x$-direction for illustration purposes and keep the other directions constant. 

We identify a linear system by applying a sinusoidal reference trajectory with time-varying frequency and amplitude. 
We consider the state $\vec{x} = \begin{pmatrix}
x_1 & x_2
\end{pmatrix}^\intercal \in \R^2$, where $x_1$ is the quadrotor's position and $x_2$ is its velocity in $x$ direction. The control input $\vec{u} \in \R$ is the position reference supplied to the PD controller. 
Using MATLAB's System Identification Toolbox we identified the system at 5Hz as
\begin{align*}
\begin{split}
\vec{x}_{k + 1} &=\begin{pmatrix}
0.8061 & 0.1090 \\
-1.4089 & 0.1419 
\end{pmatrix}\vec{x}_{k} + \begin{pmatrix}
0.1507 \\
1.5491
\end{pmatrix}\vec{u}_{k} + \vec{w}_k \,,\\
\vec{y}_{k} &=\begin{pmatrix}
1 & 0
\end{pmatrix}\vec{x}_{k} + \vec{v}_{k} \,.
\end{split}
\end{align*}
The state constraints are given by the area of the experimental space and the maximum velocities of the quadrotor $\set{X} = \{ \SI{-2.0}{\meter} \leq x_1 \leq \SI{2.0}{\meter}, \SI{-2.0}{\frac{\meter}{\second}} \leq x_2 \leq \SI{2.0}{\frac{\meter}{\second}} \}$. The control input set is the input space covered by the data used in the system identification: $\set{U} = \{ u \in \R : \SI{- 1.016}{\meter}  \leq u \leq \SI{ 0.9854}{\meter} \}$. 
The maximum and minimum errors from applying the same control inputs from the sinusoidal trajectory to the linear model compared to the true system are used as the system's process noise: $\set{W} = \{ \vec{w} \in \R^2 : \SI{-0.0233}{\meter} \leq w_1 \leq \SI{0.0201}{\meter}, \SI{-0.2049}{\frac{\meter}{\second}} \leq w_2 \leq \SI{0.1703}{\frac{\meter}{\second}} \}$. The measurement noise is identified by measuring the error when the quadrotor is on a fixed position on the ground and is given by $\set{V} = \{ v \in \R : \lvert v \rvert \leq \SI{0.001}{\meter} \}$. We pick $\vec{Q} = \mathrm{diag}(10, 100)$, $\vec{R} = 10$. The matrix $\vec{P}$ is obtained from the associated unconstrained LQR problem. We select the controller and observer gain as $\vec{K} = \vec{K}_f = \begin{pmatrix}
0.6568 & -0.1091
\end{pmatrix}$ and $\vec{L} = \begin{pmatrix}
-0.8608 & 1.4055
\end{pmatrix}^\intercal$, respectively. The RPI sets are determined through fixed-point iterations until~\eqref{eq:e_inf},~\eqref{eq:xi_inf},~and~\eqref{eq:z_inf} are satisfied.  We initialize the RLO-MPC with a trajectory from a robust MPC~\cite{Mayne2006}, but optimize for a different cost function with $\vec{Q}_{\mathrm{init}} = 10 \vec{Q}$, $\vec{R}_{\mathrm{init}} = 1$. The MPC horizon is $H = H_{\mathrm{init}} = 10$. 

We execute five separate runs with four iterations each. Each iteration is executed for $\SI{20}{\second}$.
Although we command the quadrotor to start from $\hat{\vec{x}}_0 = \begin{pmatrix} 0.7 & 0.0
\end{pmatrix}^\intercal$ in each iteration, precisely controlling the quadrotor to this initial condition is challenging.
Instead, we consider a set of initial conditions $\set{E}_0$. This initial state set is shown in magenta in Figure~\ref{fig:experiment_closed_loop}. 
To avoid delays in the computation of the control signal, we create the convex safe set and terminal cost function only from the last two iterations. 
The resulting closed-loop trajectory for the last iteration of the first run is shown in Figure~\ref{fig:experiment_closed_loop}. The true state $\vec{x}$ of the quadrotor is estimated offline using Vicon position measurements.
Figure~\ref{fig:experiment_closed_loop} shows that the measurement error is contained inside $\set{E}_\infty$ (shown for the first five time steps). Here, this set already contains all other errors as well. Since $\set{E}_\infty \subseteq \begin{pmatrix}
\vec{I} & \vec{I}
\end{pmatrix} \set{Z}_\infty$, we conclude that our modeling assumptions hold. The proposed RLO-MPC successfully controls the true state to a neighborhood of the origin without violating any state or input constraints. 
\begin{figure}[tb]
	\centering
	\input{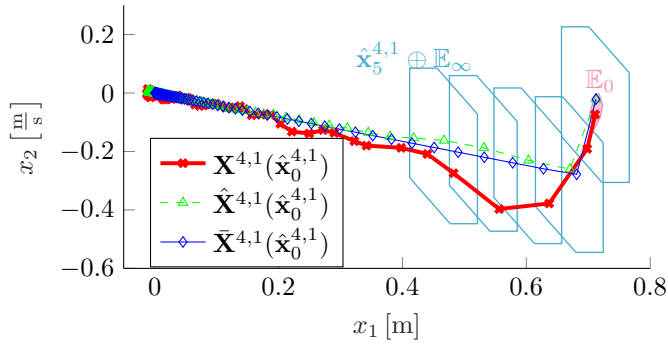}
	\caption{Closed-loop trajectory of the proposed RLO-MPC at iteration $j = 4$ of the first run on the quadrotor. As desired, the true state is controlled to a neighborhood of the origin despite modeling errors. Our modeling assumptions hold, as the true state $\vec{x}_k \in \hat{\vec{x}}_k \oplus \set{E}_\infty$ for all~$k$. The set $\set{E}_\infty$ at the state estimates for the first five steps is shown in light blue.} 
	\label{fig:experiment_closed_loop}
\end{figure}

For evaluating the nominal performance of the controller, we pick $T = 2$. Then we can determine the upper bounds on the nominal iteration cost, which is shown in Table~\ref{tab:experiment_upper_bound}. The minimum of each column is bold.
The upper bound $\min_r \bar{V}_{0 \to N}^{j, r}(\set{E}_0)$ is greater than the maximum of each iteration. Furthermore, the upper bound improves over the iterations. 
The upper bound can be improved by running additional iterations or by shrinking the set $\set{E}_0$. 

\begin{table}[tb]
	\centering
	\caption{The total nominal iteration cost $V^{j, r}_{0 \to N}(\hat{\vec{x}}_0^{j, r})$ and the derived upper bound $\bar{V}^j_{0 \to N}(\set{E}_0)$ for iterations $j \in \{0, \dots, 4\}$ and runs $r \in \{1, \dots, 5\}$ of the quadrotor experiments.
	}
	\begin{tabular}{lcccc}
		\toprule
		& \multicolumn{3}{c}{$V^{j, r}_{0 \to N}(\hat{\vec{x}}_0^{j, r})$} & \\
		j & $\min_r$ & $\max_r$ & $\mathrm{median}_r$ & $\min_r \bar{V}_{0 \to N}^{j, r}(\set{E}_0)$ \\
		\midrule
		0 & 106.555 & 112.746 & 108.724 &  -\\
		1 & 101.681 & 106.639 & \textbf{104.026} & 112.013\\
		2 & \textbf{100.808} & \textbf{106.261} & 104.103 & 107.692\\
		3 & 102.076 & 106.738 & 104.352 & 107.686\\
		4 & 104.776 & 106.338 & 105.491 & \textbf{107.677}\\
		\bottomrule
	\end{tabular}
\label{tab:experiment_upper_bound}
\end{table}
\balance 
\section{CONCLUSIONS}
We proposed RLO-MPC for performing iterative tasks. Our approach applies to LTI systems with uncertain observations and dynamics. 
We gave theoretical guarantees with regards to recursive feasibility, stability and performance of the closed-loop system. 
We relaxed the impractical assumption of identical initial conditions over all iterations and instead considered initial conditions from a polytopic set.
We showed that the worst-case achievable performance of our proposed controller is bounded. 
Furthermore, the derived upper bound is non-increasing and is reduced over iterations.
We successfully validated our proposed RLO-MPC in simulation and through experiments on a quadrotor. 
In simulation we compared our approach to a certainty equivalent approach, which does not consider state estimation errors. The certainty equivalent approach violated constraints for every run, while the proposed RLO-MPC always satisfied the constraints. This emphasizes the importance of accounting for uncertainty in the state estimation. 

\addtolength{\textheight}{-12cm}   



%
%

\bibliographystyle{IEEEtranS}
\bibliography{refs}

\end{document}